\theoremstyle{plain}
\newtheorem{thm}{Theorem}[section]
\newtheorem{prop}[thm]{Proposition}
\newtheorem{cor}[thm]{Corollary}
\theoremstyle{definition}
\theoremstyle{remark}
\newtheorem{rem}[thm]{Remark}
\theoremstyle{plain}
\newcommand{\R}{\mathbb{R}}
\newcommand{\eq}[1]{\begin{equation}#1\end{equation}}
\newcommand{\alg}[1]{\begin{aligned}#1\end{aligned}}
\newcommand{\bc}{\begin{cases}}
\newcommand{\ec}{\end{cases}}
\newcommand{\la}{\lambda}
\newcommand{\Si}{\Sigma}
\newcommand{\La}{\Lambda}
\renewcommand{\title}[1]{{\bfseries #1}\par}
\renewcommand{\author}[1]{\medskip{#1}\par\smallskip}
\numberwithin{equation}{section}
\begin{document}
\begin{center}
\title{\LARGE The Einstein--$\La$ flow on product manifolds}
\vspace{3mm}
\author{\large David Fajman and Klaus Kröncke}
\Footnotetext{}{2010 \emph{Mathematics Subject Classification.} 53B30,53C25,83C05.}
\Footnotetext{}{\emph{Key words and phrases.} General Relativity, Einstein metrics, Recollapse, positive cosmological constant.}

\vspace{1mm}
\end{center}

\begin{center}
July, 2016
\end{center}
\vspace{2mm}

\begin{abstract}
We consider the vacuum Einstein flow with a positive cosmological constant $\La$ on spatial manifolds of product form $M=M_1\times M_2$. 
In dimensions $n=\dim M\geq 4$ we show the existence of continuous families of recollapsing models whenever at least one of the factors $M_1$ or $M_2$ admits a Riemannian Einstein metric with positive Einstein constant. We moreover show that these families belong to larger continuous families with models that have two complete time directions, i.e.~do not recollapse. Complementarily, we show that whenever no factor has positive curvature, then any model in the product class expands in one time direction and collapses in the other. In particular, positive curvature of one factor is a necessary criterion for recollapse within this class. Finally, we relate our results to the instability of the Nariai solution in three spatial dimensions and point out why a similar construction of recollapsing models in that dimension fails.
The present results imply that there exist different classes of initial data which exhibit fundamentally different types of long-time behavior under the Einstein--$\La$ flow whenever the spatial dimension is strictly larger than three. Moreover, this behavior is related to the spatial topology through the existence of Riemannian Einstein metrics of positive curvature.
\end{abstract}
 
%


\section{Introduction}
A fundamental open problem in the mathematical study of cosmological models is the determination of the long-time behavior of the Einstein flow ($\La=0$) or the Einstein-$\La$ flow ($\La>0$) on closed spatial manifolds $M$, where $\La$ denotes the cosmological constant. By \emph{Einstein flow} we refer to the form of the Einstein equations when a foliation of spacetime is chosen. In that case the equations appear as a system of PDEs that determines the evolution of geometric tensor fields and is in this way similar to other geometric flows. This term has been coined in earlier works (cf.~\cite{AnMo11}).  In the $\La=0$ case it is well-known that the long-time behavior depends on the topology of $M$. If $M$ is three-dimensional and of positive Yamabe type $Y_+$, it is conjectured that its maximal globally hyperbolic development (MGHD) recollapses, i.e.~is future and past geodesically incomplete \cite{BaGaTi86} (\emph{Closed Universe Recollapse conjecture}). This behavior is proven for the class of spherically symmetric solutions \cite{BuRe96,He02}. In the general case, the problem is open. In the negative Yamabe class $Y_-$ it is known that in an open neighborhood of initial data corresponding to the Milne model, this solution serves as an attractor of the Einstein flow \cite{AnMo11}. It is conjectured that under additional restrictions the global geometry of the Milne model (future completeness) is prototypical for the behavior of the Einstein flow in the classes $Y_-$ or $Y_0$ (but the spacetime has to be non-flat) in three spatial dimensions, i.e.~that the spacetimes admit CMC foliations where the mean curvature takes values in $(-\infty,0)$ and that this foliation covers the entire spacetime \cite{Re96}. (Cf.~also \cite{An01} for a discussion of the recollapse conjecture.)\\
For the Einstein-$\La$ flow with $\La>0$ on closed manifolds of dimension $n=3$ (i.e.~$n$ denotes the spatial dimension) the explicit solutions and their stability analysis suggest that the behavior of the flow is less topology-dependent and in particular recollapsing models are to our knowledge not known to exist (in the vacuum case). All known models possess at least one complete time direction, in the case of deSitter spacetime, which has spatial topology $\mathbb S^3$, both time directions are complete -- one of which can be viewn as future contracting, the other one as future expanding. These regions merge at a minimal surface. For Cauchy hypersurfaces in the negative or zero Yamabe class homogeneous models are past incomplete and future complete. All those models are future nonlinear stable \cite{Ri08}. In particular in an open neighborhood of initial data induced by those background solutions, the Einstein-$\La$ flow is well understood towards the expanding direction.

\subsection{The Einstein-$\La$ flow on product manifolds}
In this paper we prove that for any spatial dimension $n\geq 4$ there exist open sets of initial data, whose MGHDs recollapse under the vacuum Einstein-$\La$ flow. This implies in particular that the behavior of the Einstein-$\La$ flow in three spatial dimensions differs substantially from the situation in \emph{any} higher dimension. Moreover, we show that within the class of product manifolds, which we consider, positive curvature of at least one factor is necessary for recollapse. This will be made precise in the following.\\

\noindent The explicit models, which we construct are of product form
\eq{\label{ansatz}
\left(\overline M,\mathbf g\right)=\left(I\times M_1\times M_2,-dt^2+a(t)^2 g_1+b(t)^2g_2\right),
}
where $g_1$ and $g_2$ are Einstein metrics, i.e.~$\mathrm {Ric}[g_i]=\la_i g_i$ for $i=1,2$ and constants $\la_i$. We introduce $x=\log a$ and $y=\log b$. Initial data for spacetimes of this form at $t=0$ is given by $(M_1\times M_2,\la_1,\la_2,\dot x(0),\dot y (0))$. We assume $x(0)=y(0)=0$ without loss of generality. Note that other values for $x$ and $y$ can be encoded in the constants $\la_1$ and $\la_2$.
\subsection{Recollapsing models}
We show that if for both factors $M_i$, $\la_i$ is positive, then there exists a continuous 1-parameter family of initial data such that this family decomposes into two open sets, with recollapsing and non-recollapsing MGHDs, respectively, and one initial data set which necessarily corresponds to an unstable spacetime. An explicit condition for the family of initial data reveals that a necessary criterion for recollapse is a certain \emph{disequilibrium} between the two factors. We make this precise in the following theorem.

\begin{thm}\label{thm-1}
Let $\Lambda=\frac{n(n-1)}{2}$ and let $M_1$, $M_2$ be closed manifolds of  dimension $\dim M_i=n_i\geq 2$, $n_1+n_2=n$. Consider initial data of the form $x(0)=y(0)=\dot x(0)=\dot y(0)=0$ such that
\eq{\label{eq-def-rel}
n(n-1)=n_1\la_1+n_2\la_2,
}
holds. Then the following cases occur.
\begin{enumerate}

\item If $\la_2>n$ and $\la_1<(n_1-1)n/n_1$, the MGHD recollapses, i.e.~is future- and past incomplete, $I=(t_-,t_+)$ for $t_-<0<t_+$. Both singularities are of Kasner-type, i.e.~one factor collapses and the other factor has unbounded volume as $t\rightarrow t_{\pm}$. The mean curvature $H\rightarrow \pm\infty$ as $t\rightarrow t_{\pm}$. The Kretschmann scalar is unbounded towards both singularities. In particular, the MGHDs are $C^2$-inextendible.

\item If $\la_2=n$ and $\la_1=(n_1-1)n/n_1$, MGHD is 
\eq{\label{crit-sol}
\left(\mathbb R\times M_1\times M_2,-dt^2+\cosh\left(\sqrt{\frac{n}{n_1}}t\right)^2\cdot g_1 + g_2\right),
}
and in particular future and past complete.
\item If $n(n_1-1)/n_1<\la_1<(n-1)$ or equivalently $n-1<\la_2<n$ the MGHD is future- and past complete. In both time directions, both factors expand exponentially with asymptotic growth rates depending only on the dimensions $n_1$ and $n_2$. 
\end{enumerate}
Furthermore, the behavior is stable in the following sense. Initial data induced by any recollapsing solution in the class above has an open\footnote{Here, open is meant in the sense of $(g,K)\in H^\ell\times H^{\ell-1}$ spaces with $\ell=\ell(n)$ sufficiently large, where $(g,K)$ denote the spatial metric and second fundamental form, respectively.} neighborhood in the initial-data manifold, such that the MGHD of each element in this neighborhood recollapses. Similarly, initial data induced by any complete solution (except the critical solution) has an open neighborhood with elements that give rise to complete MGHDs. The critical solution \eqref{crit-sol} is unstable. 
\end{thm}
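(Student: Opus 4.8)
The plan is first to insert the ansatz \eqref{ansatz} into the vacuum equation $\mathrm{Ric}[\mathbf g]=\tfrac{2\Lambda}{n-1}\mathbf g$ and to record the resulting system for $x=\log a$, $y=\log b$. Writing $\theta=n_1\dot x+n_2\dot y$ for the expansion and $R[h]=n_1\lambda_1 e^{-2x}+n_2\lambda_2 e^{-2y}$ for the scalar curvature of $a^2g_1+b^2g_2$, the spatial equations become $\ddot x+\theta\dot x+\lambda_1 e^{-2x}=n$ and $\ddot y+\theta\dot y+\lambda_2 e^{-2y}=n$ (using $2\Lambda/(n-1)=n$), while the $tt$-equation gives the Hamiltonian constraint $R[h]-(n_1\dot x^2+n_2\dot y^2)+\theta^2=n(n-1)$. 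I would verify that this constraint is propagated by the evolution, so it only needs to hold at $t=0$, where it reduces exactly to \eqref{eq-def-rel}. Two structural facts drive everything: the system is invariant under $t\mapsto -t$ and the slice is time-symmetric ($\dot x(0)=\dot y(0)=0$), so each solution is even in $t$ and it suffices to study $t\ge 0$; and a direct substitution confirms that \eqref{crit-sol} solves the system, so case (2) is the separatrix.

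\textbf{Case distinction for the product models.} The sign of $\ddot x_k(0)=n-\lambda_k$ fixes the initial tendency of each factor. In case (1), $\lambda_1<n<\lambda_2$, so factor $1$ accelerates while factor $2$ decelerates. I would first show $\dot y<0$ for all $t>0$: if $\dot y$ returned to $0$ at a first time $t_1$, then $y(t_1)<0$ forces $\ddot y(t_1)=n-\lambda_2 e^{-2y(t_1)}<n-\lambda_2<0$, a contradiction, so $b$ is strictly decreasing. To obtain a crunch I would exploit the two forms of the Raychaudhuri identity $\dot\theta=n-(n_1\dot x^2+n_2\dot y^2)=n^2-\theta^2-R[h]$: a bootstrap combining the monotone collapse of $b$ with $R[h]\to\infty$ drives $\theta$ negative and then into the regime $\dot\theta\le-\theta^2$, forcing $\theta\to-\infty$ (hence $H\to+\infty$) at a finite $t_+$. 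In case (3), $n-1<\lambda_2<n$ gives $\ddot x_k(0)>0$ for both factors; here I would show $\dot x,\dot y$ remain positive, so $a,b\to\infty$, the curvature terms decay, and the velocities converge to the isotropic attractor of $\ddot x_k+\theta\dot x_k=n$, yielding a common exponential rate fixed by $\Lambda$ and $n$ alone and global existence in both directions.

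\textbf{Singularity structure and completeness.} Near $t_+$ in case (1) I expect a velocity-dominated singularity: the constraint forces the Kasner relations $n_1p_1+n_2p_2=n_1p_1^2+n_2p_2^2=1$ for the leading exponents $a\sim(t_+-t)^{p_1}$, $b\sim(t_+-t)^{p_2}$; since $n_k\ge 2$ gives $p_k^2\le 1/n_k<1$, the curvature and $\Lambda$ terms are negligible against the $(t_+-t)^{-2}$ kinetic terms, one exponent is negative (unbounded factor) and the other positive (collapsing factor), the total volume $\sim(t_+-t)\to 0$, and $H\to+\infty$. The Kretschmann scalar contains the $\sim(t_+-t)^{-4}$ contribution of $\ddot a/a$, so it is unbounded and the MGHD is $C^2$-inextendible; the $t\mapsto -t$ symmetry gives the same at $t_-=-t_+$. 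In cases (2),(3) the solution is global with both factors expanding, hence future- and past-complete.

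\textbf{Stability.} The inequalities defining cases (1) and (3) are open in $(\lambda_1,\lambda_2,\dot x(0),\dot y(0))$, which yields stability within the product class and exhibits \eqref{crit-sol} as a separatrix, hence unstable. For stability under arbitrary perturbations in $H^\ell\times H^{\ell-1}$ I would split into the two regimes. In the expanding case both time directions are asymptotically de Sitter, so I would invoke and adapt the nonlinear stability theory for accelerated expansion under the Einstein--$\Lambda$ flow, applied separately to the future and the past of the time-symmetric slice. In the recollapsing case, Cauchy stability and continuous dependence control the perturbed development on every compact subinterval of $(t_-,t_+)$, and the remaining task is to show that the finite-time curvature blow-up persists, which I would attack through a robust blow-up criterion tied to the monotone growth of $|K|^2$ and the focusing encoded in $\dot\theta=n-|K|^2$. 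I expect this two-sided singularity-formation stability to be the main obstacle: unlike the expanding regime there is no large-time attractor to exploit, so one must propagate the anisotropic crunch all the way to the singularity while controlling the full, untruncated geometry rather than only the product degrees of freedom.
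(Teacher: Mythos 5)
Your reduction to the constrained ODE system, both forms of the Raychaudhuri identity, and the first-zero monotonicity arguments for $\dot x$ and $\dot y$ coincide with the paper's proof, and your treatment of Cases 2 and 3 is essentially the paper's (Riccati comparison with $z'=n^2-z^2$ giving $0<\theta<n$ and global existence). However, there are two genuine gaps. First, in Case 1 the step you describe as ``a bootstrap combining the monotone collapse of $b$ with $R[h]\to\infty$'' assumes exactly what has to be proven: monotone decrease of $y$ does not imply $y\to-\infty$, and a priori $y$ could level off at a finite limit, in which case $R[h]$ stays bounded and the regime $\dot\theta\le -C-\theta^2$ is never reached. The paper's central effort in Case 1 is devoted to ruling this out: one first shows that no derivative can blow up while $y$ stays bounded, then assumes $y\to y_\infty>-\infty$, deduces from the $y$-equation that $x'\to\infty$, and derives a contradiction from the constraint, which forces the term $n_1(n_1-1)(x')^2$ (here $n_1\ge 2$ enters) to be balanced by quantities that remain bounded. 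Without this argument the finite-time crunch is not established. Relatedly, your Kasner asymptotics $a\sim(t_+-t)^{p_1}$, $b\sim(t_+-t)^{p_2}$ are only heuristic; the paper neither proves nor needs them, getting Kretschmann blow-up directly from $|\tilde R|^2\ge n_1\bigl(x''+(x')^2\bigr)^2$ together with $x'\to\infty$ and $z=n_1x'+n_2y'\to-\infty$.

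Second, and more importantly, the stability of recollapse under arbitrary non-product perturbations in $H^\ell\times H^{\ell-1}$ --- which you correctly identify as the main obstacle and then leave open --- is settled in the paper by a short argument whose key idea is missing from your proposal. The theorem only asserts geodesic incompleteness for the perturbed MGHDs, not curvature blow-up, so there is no need to propagate the anisotropic crunch at all. The recollapsing backgrounds contain Cauchy hypersurfaces with expansion $\theta<\theta_0<-n$ (and, after time reflection, $\theta>n$ towards the past). By Cauchy stability of the Einstein equations, the MGHD of any sufficiently small perturbation of the initial data still contains Cauchy surfaces satisfying these strict inequalities, and the Andersson--Galloway version of Hawking's singularity theorem for $\Lambda>0$ (Proposition \ref{sing-thm}) then yields future and past incompleteness. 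Your proposed alternative --- a ``robust blow-up criterion tied to the monotone growth of $|K|^2$'' --- would attempt to prove strictly more than the theorem claims and is far harder than what is required; the singularity theorem is precisely the tool that makes the perturbation statement robust.
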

\begin{rem}
The case $\la_1=\la_2=n-1$ corresponds to the background solution
\eq{
-dt^2+ \cosh(t)^2 \cdot (g_1+g_2).
} 
If $\la_2<\la_1$ we divide into three subcases that relate to the previous cases by interchanging $\la_1$ and $\la_2$ (and $n_1$ and $n_2$) in the statements of the theorem.
\end{rem}

\begin{rem}
Relation \eqref{eq-def-rel} provides a simple interpretation for the case $\la_1,\la_2>0$. Then the first condition in the theorem concerns initial data for which the curvatures of both factors differ substantially while the last condition corresponds to initial data where both factors are of equal or almost equal Ricci-curvature. We conclude that for the case of positive curvature recollapse corresponds to a disequilibrium in curvature between the factors.
\end{rem}

\subsection{Products of non-positive factors}
For product manifolds with two non-positive factors (i.e.~flat or negatively curved), all spacetimes of the form \eqref{ansatz} have a uniform behavior -- one time direction is complete while the other one is incomplete.
\begin{thm}\label{thm-2}
Every spacetime of the form \eqref{ansatz} with $\la_1,\la_2\leq 0$ is complete in one time direction and incomplete in the other. 
\end{thm}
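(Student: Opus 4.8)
The plan is to reduce the completeness question for the product spacetimes to an explicit ODE analysis of the scale factors $x=\log a$ and $y=\log b$. The metric \eqref{ansatz} is vacuum Einstein with cosmological constant $\La$ precisely when the Einstein equations reduce to a constraint equation (the Hamiltonian constraint) at $t=0$ together with evolution equations for $x$ and $y$. First I would write down these equations explicitly: the evolution equations take the schematic form $\ddot x + (\text{mean-curvature terms})\dot x = -\la_1 e^{-2x} + (\text{const})$, and similarly for $y$, where the $e^{-2x}$ term carries the Ricci curvature $\la_1 g_1$ of the first factor (and likewise $e^{-2y}$ carries $\la_2$). The crucial structural point is that when $\la_i\le 0$ the curvature-driven term $-\la_i e^{-2x}$ is \emph{nonnegative} rather than providing a restoring force; geometrically, non-positively curved factors cannot generate the deceleration needed to turn expansion into contraction.

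Next I would introduce the mean curvature $H=n_1\dot x+n_2\dot y$ (up to normalization) as the natural monotone quantity and analyze its evolution. I expect the Raychaudhuri-type equation to show that $\dot H$ has a definite sign, forcing $H$ to be strictly monotone along the flow, so that $H$ ranges over an interval and passes through at most one zero. The sign of the source terms — with $\la_1,\la_2\le 0$ contributing with a favorable sign and $\La>0$ contributing as well — should guarantee that $H\to+\infty$ in one time direction and $H\to-\infty$ in the other, or more precisely that the solution leaves every compact set of the $(x,y,\dot x,\dot y)$ phase space in finite affine time on one side and exists for infinite proper time on the other. The completeness of the expanding direction would then follow from the standard fact that when $H$ stays bounded away from the contracting regime and the volume grows without bound, the Lorentzian length of timelike geodesics diverges; the incompleteness of the contracting direction would follow from a curvature blow-up forced by the monotonicity of $H$ reaching a finite-time singularity.

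The key dichotomy to establish is therefore: for $\la_1,\la_2\le 0$ the combined curvature feedback is never strong enough to reverse the sign of $H$ twice, so exactly one time direction recollapses and the other expands forever. Concretely I would show that the set $\{\dot H=0\}$ cannot be crossed in both directions, using that $\dot H = -|K|^2 - (\text{curvature terms with good sign}) + (\text{terms involving }\La)$, and that under the constraint \eqref{eq-def-rel}-type relation the $\La$ contribution is dominated in the contracting regime. This gives a single crossing of $H=0$ and hence one complete and one incomplete end. I would conclude by verifying that the incomplete end is a genuine singularity (one scale factor $\to 0$ in finite proper time with Kretschmann blow-up, as in case~(1) of Theorem~\ref{thm-1}) rather than a coordinate artifact.

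The hard part will be ruling out the \emph{degenerate} scenarios at the boundary $\la_i=0$, where the curvature source term vanishes entirely and the corresponding factor evolves freely: one must check that even a flat factor paired with a negatively curved or flat factor still yields exactly one complete and one incomplete direction, rather than a fully complete (de Sitter--like) or fully incomplete behavior. The other delicate point is establishing that the monotone quantity $H$ truly exhausts $(-\infty,+\infty)$ under all the sign combinations $\la_1,\la_2\le 0$; this requires a careful phase-space argument showing no fixed points or bounded orbits obstruct the flow, which is where the assumption $\La>0$ does essential work by precluding the static and recollapsing-on-both-sides configurations available only when some $\la_i>0$.
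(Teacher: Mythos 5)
There is a genuine gap, and it sits at the center of your argument. Your key dichotomy --- ``the set $\{\dot H=0\}$ cannot be crossed in both directions \ldots{} This gives a single crossing of $H=0$ and hence one complete and one incomplete end'' --- is not only unproven but points in the wrong direction: a solution whose mean curvature crosses zero exactly once is the signature of a spacetime that is complete in \emph{both} time directions (de Sitter space, and Cases 2 and 3 of Theorem \ref{thm-1}, all do exactly this: contraction turning into expansion through a minimal Cauchy surface). So a single sign change of $H$ cannot yield the conclusion of Theorem \ref{thm-2}. The fact your proposal misses is much simpler and stronger: when $\la_1,\la_2\leq 0$ the spatial scalar curvature satisfies $R(g)=n_1\la_1e^{-2x}+n_2\la_2e^{-2y}\leq 0$ at \emph{every} time, and the Hamiltonian constraint
\begin{equation*}
n(n-1)=R(g)-|\Si|^2+\tfrac{n-1}{n}H^2
\end{equation*}
then forces $H^2\geq n^2>0$ for all time. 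Hence $H$ \emph{never} vanishes: there is no crossing at all, and the spacetime is permanently contracting in one time direction and permanently expanding in the other. (Note also that your appeal to an ``\eqref{eq-def-rel}-type relation'' is vacuous here: \eqref{eq-def-rel} cannot hold with $\la_1,\la_2\leq 0$, since its left side is positive and its right side nonpositive --- which is precisely the constraint obstruction just described.) This one-line consequence of the constraint also disposes of your worried ``degenerate'' flat cases without any phase-space analysis: either $R(g)\equiv|\Si|\equiv 0$, which pins the solution down to $-dt^2+e^{2t}g$ (past incomplete, future complete), or $R(g)-|\Si|^2\leq R_0<0$ and $|H|\geq\theta_0>n$ strictly.

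Given that dichotomy, the paper finishes differently from what you outline, and more cheaply. In the contracting direction it does not prove finite-time blow-up or Kretschmann divergence at all; it simply applies the Andersson--Galloway singularity theorem (Proposition \ref{sing-thm}): a Cauchy surface with expansion $\theta\leq\theta_0<-n$ already implies geodesic incompleteness. (Kasner-type behavior and curvature blow-up, which you planned to verify ``as in Case 1 of Theorem \ref{thm-1}'', are only \emph{conjectured} in a remark after the proof; proving them would require monotonicity properties of $x$ and $y$ that were derived in Theorem \ref{thm-1} from positive curvature and are not available here.) In the expanding direction, completeness does not follow from ``volume growth'' but from a quantitative Raychaudhuri comparison: $\partial_t\theta\leq n-\theta^2/n$ together with $\theta\geq n$ gives $0\leq\theta-n\leq Ce^{-2t}$, and feeding this back into the constraint yields $0\leq -R(g)+|\Si|^2\leq Ce^{-2t}$, so each $e^{-2x}$, $e^{-2y}$ term decays and $x,y$ grow linearly rather than collapsing; this is what rules out a finite-time degeneration in the complete direction. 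Your proposal has the right ingredients (constraint, Raychaudhuri, a singularity theorem), but without the observation that $R(g)\leq 0$ forces $|H|\geq n$, the argument as written cannot distinguish the claimed behavior from the two-sided-complete behavior that occurs when some $\la_i>0$.
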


\begin{rem}
The previous theorem implies that within the class given by \eqref{ansatz} positive curvature of at least one factor is a necessary criterion for recollapse.

\end{rem}
\subsection{The three-dimensional case}
We have mentioned that an analogous construction of recollapsing models in 3+1 dimensions fails. If we consider the product manifold $\mathbb S^2\times \mathbb S^1$, there is, however, an interesting connection to the present study. The Nariai solution
\eq{
-dt^2+\frac{1}{3}g_{\mathbb S^2}+\cosh^2(\sqrt{3}t)dx^2,
}
where $g_{\mathbb S^2}$ denotes the metric of constant curvature $1$, is known to be unstable \cite{Be09-1,Be09-2} and shares similar features with the critical solution in Theorem \ref{thm-1}, which has one constant factor and one expanding factor (as Nariai does). However, the instability of this critical solution is obvious since it marks the border between two different regimes of initial data, in particular, it is unstable already in this class of product manifolds. We provide an elementary proof of the instability of the Nariai solution in our formalism.
\begin{thm}[cf.~\cite{Be09-1,Be09-2}]\label{thm-3}
The Nariai solution is unstable in the class of product manifolds of the type \eqref{ansatz} on $\mathbb S^1\times \mathbb S^2$, i.e.~there are arbitrarily small perturbations with an incomplete MGHD.
\end{thm}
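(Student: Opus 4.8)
The plan is to reduce the vacuum Einstein--$\La$ equations for the ansatz \eqref{ansatz} to the autonomous ODE system for $(x,y,\dot x,\dot y)$ together with the Hamiltonian constraint — the same reduction that underlies Theorem \ref{thm-1} — and then to recognize the Nariai solution as the degenerate $n_1=1$ limit of the critical solution \eqref{crit-sol}. Specializing to $\mathbb S^1\times\mathbb S^2$ we have $n=3$, $n_1=1$, $n_2=2$, and choosing $g_1=dx^2$, $g_2=\tfrac13 g_{\mathbb S^2}$ gives $\la_1=0$, $\la_2=n=3$, so that \eqref{eq-def-rel} reads $n_1\la_1+n_2\la_2=6=n(n-1)$. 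In these variables the Nariai solution is the trajectory with the sphere frozen at its equilibrium $y\equiv 0$ (where $\la_2 e^{-2y}=n$) and $\dot x$ solving $\ddot x+\dot x^2=n$; at the moment of time symmetry $t=0$ it sits at $x=y=\dot x=\dot y=0$.

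First I would record the specialized equations: the circle equation $\ddot x+\dot x\,\tau=n$ with $\tau=\dot x+2\dot y$, the sphere equation $\ddot y+\dot y\,\tau+n\,e^{-2y}=n$, and the constraint $2\dot x\dot y+\dot y^2+n\,e^{-2y}=n$. The key observation is that eliminating $n\,e^{-2y}$ from the sphere equation by means of the constraint collapses it to the simple relation
\eq{
\ddot y=\dot y(\dot x-\dot y),
}
while the constraint expresses the sphere size algebraically through $n\,e^{-2y}=n-2\dot x\dot y-\dot y^2$. Writing $(u,w)=(\dot x,\dot y)$, the dynamics is thus governed by the planar system $\dot u=n-u^2-2uw$, $\dot w=w(u-w)$ on the region $\{\,n-2uw-w^2>0\,\}$, and sphere collapse $y\to-\infty$ (with curvature blow-up) occurs precisely when $n-2uw-w^2\to+\infty$. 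The Nariai orbit is the heteroclinic connection $w\equiv0$ joining the saddle points $(\pm\sqrt n,0)$, whose unstable directions are transverse to $\{w=0\}$; this already signals the instability.

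Next I would perturb the data on the constraint surface through the Nariai point. Imposing $x(0)=y(0)=0$, the constraint forces $\dot y(0)\big(2\dot x(0)+\dot y(0)\big)=0$, so admissible nearby data split into the branch $\dot y(0)=0$ — which stays on the invariant line $w\equiv0$ and yields complete, Nariai-type solutions — and the branch $\dot y(0)=-2\dot x(0)$. Taking $(u_0,w_0)=(\varepsilon,-2\varepsilon)$ with $\varepsilon>0$ arbitrarily small, the trajectory starts in $R=\{u>0,\ w<0\}$. I would then show $R$ is forward invariant (on $\{u=0\}$ one has $\dot u=n>0$, and $\{w=0\}$ is invariant) and that inside $R$ one has $\dot w=wu-w^2<-w^2$; a Riccati comparison then forces $w\to-\infty$ in finite time $t_*\le 1/(2\varepsilon)$. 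Since $\dot y=w$, integrating gives $y\to-\infty$, so the sphere collapses and the spatial scalar curvature $\la_2 n_2 e^{-2y}$ blows up at the finite time $t_*$: the MGHD is future incomplete. By the time-reversal symmetry $(u,w,t)\mapsto(-u,-w,-t)$ the data $(-\varepsilon,2\varepsilon)$ are past incomplete. As $\varepsilon\to0$ all these data converge to the Nariai data, which proves the instability and exhibits Nariai as the border between a complete and an incomplete regime.

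The main obstacle is the finite-time incompleteness step: the linearization at the saddles only shows that perturbations leave every neighborhood of the heteroclinic orbit, which by itself does not separate ``flows off to the complete node where both factors expand'' from ``forms a singularity.'' The substantive content is therefore the trapping region $R$ together with the Riccati bound $\dot w<-w^2$, which upgrade mere dynamical instability to a genuine finite-time curvature singularity; along the way one must check that the constraint keeps $n-2uw-w^2>0$ (automatic, as it equals $n\,e^{-2y}$) and that the endpoint $t_*<\infty$ is an inextendible end rather than a coordinate artifact, for which the blow-up of $\la_2 n_2 e^{-2y}$ (and of the Kretschmann scalar, as in Theorem \ref{thm-1}) suffices. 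Finally, since the entire construction lives in the finite-dimensional family of homogeneous product data, ``arbitrarily small'' is controlled purely by the smallness of $(\dot x(0),\dot y(0))$, so no genuine PDE-stability input is required here.
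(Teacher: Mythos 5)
Your proposal is correct and reaches the same conclusion --- finite-time, Kasner-type collapse of the sphere factor with curvature blow-up --- but by a genuinely different route. The paper labels the factors oppositely ($n_1=2$, $\la_1=\la$ for $\mathbb S^2$; $n_2=1$, $\la_2=0$ for $\mathbb S^1$), fixes the circle velocity $y'(0)=0$, and perturbs along the one-parameter family $3=\la+(x'(0))^2$, so its perturbation changes the initial sphere metric together with the sphere velocity; the constraint is then used to decouple a single scalar second-order ODE for the sphere log-scale, $x''=\tfrac12\bigl(3-\la e^{-2x}-3(x')^2\bigr)$, and the proof runs through monotonicity of $x$ and $x'$ followed by the Riccati bound $x''\le -\tfrac32(x')^2$. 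You instead keep both metrics fixed, so your perturbation lives entirely in the second fundamental form $K=-\varepsilon g_1+2\varepsilon g_2$, and you use the constraint to eliminate $e^{-2y}$, obtaining the planar autonomous system $\dot u=n-u^2-2uw$, $\dot w=w(u-w)$ for the velocities, with the sphere size recovered algebraically. The trapping region $\{u>0,\,w<0\}$ together with $\dot w\le -w^2$ replaces the paper's monotonicity argument. What your route buys: the phase portrait makes the instability structurally transparent (Nariai is the heteroclinic orbit on the invariant line $w=0$), it makes precise the observation --- which the paper only hints at --- that Nariai is the $n_1=1$ degeneration of the critical solution \eqref{crit-sol}, and it displays both branches of constraint-compatible data through the Nariai point. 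What the paper's route buys: a one-dimensional problem with less bookkeeping. One motivational slip in your write-up: of the two saddles $(\pm\sqrt n,0)$, only $(+\sqrt n,0)$ has its unstable direction transverse to $\{w=0\}$; at $(-\sqrt n,0)$ the unstable eigenvector lies along the invariant line and the transverse direction is stable. This is not load-bearing.

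The one step you should tighten is ``since $\dot y=w$, integrating gives $y\to-\infty$'': finite-time blow-up of $\dot y$ does not by itself force $y$ to diverge, since the blow-up could a priori be integrable. The repair is standard and comes from the same Riccati inequality: running the comparison from an arbitrary $t_1<t_*$ gives $t_*\le t_1+1/|w(t_1)|$, hence $|w(t)|\ge (t_*-t)^{-1}$, whose integral diverges; therefore $y\to-\infty$, so $e^{-2y}\to\infty$, and the constraint then forces $y''+(y')^2=uw\to-\infty$, which makes the Kretschmann scalar blow up exactly as in the proof of Theorem \ref{thm-1}. (The paper's own proof of Theorem \ref{thm-3} makes the identical gloss when it integrates the comparison solution to conclude $x\to-\infty$, so this is a shared and easily repaired ellipsis rather than a defect of your approach.)
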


\subsection{Discussion}
The results presented in this paper for the case where at least one factor has positive curvature concern so far only one-parameter families of initial data and their neighborhood in the initial data manifold. A complete analysis of the 3-dimensional space of initial data with this product structure with techniques similar to those used in \cite{AnHe07} is clearly desirable. However, in view of the implications for open neighborhoods in the full initial data manifold our results already establish the existence of different stable types of behavior for the Einstein-$\La$ flow in higher dimensions ($\dim M=n\geq 4$). A partial result by the authors, which concerns only even spatial dimensions has been discussed in further detail in \cite{FaKr15}.

\section{Proofs}
We proceed with proving Theorems \ref{thm-1}, \ref{thm-2} and \ref{thm-3}. We begin by presenting the Einstein equations for the product ansatz \eqref{ansatz} and recall a modified version of the Hawking singularity theorem by Andersson and Galloway \cite{AnGa04}, which applies to solutions of the Einstein equations with positive cosmological constant. This theorem allows to conclude incompleteness towards one time direction for a large set of models and is relevant for certain proofs below. We then close with the proofs of Theorems \ref{thm-1}, \ref{thm-2} and \ref{thm-3}.
\subsection{Preliminaries}
We set $\Lambda=\frac{n(n-1)}{2}$ and consider solutions of the form \eqref{ansatz}.
Then, recall $x=\log(a)$ and $y=\log(b)$, the Einsteins equations imply the following system of ODEs:
\eq{\label{eq : wp-ev}\alg{
		x''&=n-\lambda_1e^{-2x}-n_1(x')^2-n_2x'y'\\
		y''&=n-\lambda_2e^{-2y}- n_2(y')^2-n_1x'y'
	}}
and the Hamiltonian constraint reduces to
\eq{n(n-1)=n_1\lambda_1e^{-2x}+n_2\lambda_2e^{-2y}+(n_1-1)n_1(x')^2+(n_2-1)n_2(y')^2+2n_1n_2x'y',}
where $x'$ denotes the time derivative of $x$. 
The equations are a direct consequence of \cite[Proposition 2.5]{DoUe05} using the ansatz \eqref{ansatz} for the metric.
A straightforward computation shows that the mean curvature $H$, the second fundamental form $K$ and the norm of the tracefree part $\Si$ of the latter are given by
\eq{\alg{
H&=-n_1x'-n_2y',\\
K&=-x'\cdot g_1-y'\cdot g_2,\\
|\Si|^2&= \frac{n_1n_2}{n}\cdot (x'-y')^2,
}}
respectively.
\subsection{The Hawking singularity theorem with positive cosmological constant}
We cite below the generalized version of the Hawking singularity theorem from \cite{AnGa04} and derive a corollary suitable for the present purpose.
\begin{prop}[cf.~{\cite[Proposition 3.4]{AnGa04}}]\label{sing-thm}
	Let $(\overline M^{n+1},\mathbf{g})$ be a $n+1$-dimensional globally hyperbolic Lorentzian manifold and suppose 
	\eq{Ric(X,X)\geq -n |\mathbf g(X,X)|}
	for all timelike vectors $X$. Assume furthermore that there exists a hypersurface $M$ whose expansion satisfies $\theta\leq \theta_0<-n$. Then, $(\overline M,\mathbf g)$ is future geodesically incomplete.
\end{prop}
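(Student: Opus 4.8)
The plan is to run the standard Raychaudhuri focusing argument by contradiction, so suppose that $(\overline M,\mathbf g)$ were future timelike geodesically complete. Since $M$ is spacelike, consider the congruence of future-directed, unit-speed timelike geodesics issuing orthogonally from $M$, with tangent field $X$, so that $\mathbf g(X,X)=-1$ and in particular $|\mathbf g(X,X)|=1$. Being hypersurface-orthogonal, this congruence is irrotational, so its vorticity vanishes, and its expansion $\theta$ starts from the value prescribed on $M$, namely $\theta|_M\le\theta_0<-n$. By the completeness assumption, each of these geodesics can be followed for all future proper time $\tau$.

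The first step is to control $\theta$ along each geodesic via the Raychaudhuri equation
\[
\frac{d\theta}{d\tau}=-\frac{1}{n}\theta^2-|\sigma|^2+|\omega|^2-Ric(X,X),
\]
where $\sigma$ and $\omega$ denote the shear and vorticity of the congruence. Dropping the term $|\sigma|^2\ge 0$, using $\omega=0$, and invoking the curvature hypothesis in the form $Ric(X,X)\ge -n|\mathbf g(X,X)|=-n$ yields the Riccati inequality
\[
\frac{d\theta}{d\tau}\le-\frac{1}{n}\bigl(\theta^2-n^2\bigr).
\]
A comparison with the solution of the model ODE $\phi'=-\tfrac1n(\phi^2-n^2)$, $\phi(0)=\theta_0$, namely $\phi(\tau)=-n\coth(\tau_*-\tau)$ with $\tau_*=\operatorname{arccoth}(|\theta_0|/n)$, shows that $\theta\le\phi\to-\infty$ within the finite proper time $\tau_*$. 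Hence every normal geodesic acquires a focal point of $M$ at or before parameter value $\tau_*$.

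The second step converts this local focusing statement into global incompleteness. The two facts I would invoke from Lorentzian comparison geometry are that a normal timelike geodesic segment containing an interior focal point of $M$ fails to maximize the Lorentzian time separation from $M$, and that in a globally hyperbolic spacetime the time separation from the (compact, achronal) hypersurface $M$ to any point in its future is realized by a normal geodesic having no interior focal point. Combining these, the length of every future-directed timelike geodesic normal to $M$ is bounded above by $\tau_*$, so the Lorentzian time separation from $M$ is uniformly bounded by $\tau_*$; this contradicts future completeness and forces future timelike geodesic incompleteness.

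The main obstacle is precisely this global step: the Raychaudhuri/Riccati estimate is essentially routine, whereas passing from a focal point on each individual normal geodesic to genuine geodesic incompleteness requires the causal-theoretic apparatus—existence of maximizers in globally hyperbolic spacetimes together with the focal-point obstruction to maximality—applied with care to the normal exponential map of $M$. In the present application this is clean because the relevant slice $M=M_1\times M_2$ is a compact Cauchy hypersurface, so maximizers exist and the bound on the time separation is uniform across $M$.
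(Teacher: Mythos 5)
A preliminary remark on the comparison itself: the paper does not prove Proposition \ref{sing-thm} at all --- it imports it from \cite{AnGa04} (Proposition 3.4 there) and only proves the corollary that follows it. So your argument can only be measured against the cited source, and it is indeed the same standard Hawking-type focusing proof that underlies that result. Your quantitative step is correct: for unit timelike $X$ the hypothesis reads $Ric(X,X)\geq -n$, the normal congruence from $M$ is irrotational, and Raychaudhuri gives $\theta'\leq -\tfrac{1}{n}(\theta^2-n^2)$; with $\theta(0)\leq\theta_0<-n$, comparison with $\phi(\tau)=-n\coth(\tau_*-\tau)$, $\tau_*=\mathrm{arccoth}(|\theta_0|/n)<\infty$, forces a focal point of $M$ along every normal geodesic by parameter $\tau_*$. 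The passage from focusing to incompleteness via maximizing normal geodesics without interior focal points is also the standard route.

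Two points deserve attention. First --- and this is really an imprecision of the statement as reproduced in the paper rather than an error of yours --- the global step requires more than ``there exists a hypersurface $M$'': one needs $M$ to be a Cauchy hypersurface (as in \cite{AnGa04}) or at least compact and achronal, since otherwise maximizers from $M$ need not exist and the time separation from $M$ can be infinite. This is not a removable technicality: the spacelike hyperboloid $\{t=-\sqrt{\varepsilon^2+|x|^2}\}$ with $\varepsilon<1$ in Minkowski space has constant expansion $-n/\varepsilon<-n$, while Minkowski space satisfies $Ric(X,X)=0\geq -n|\mathbf{g}(X,X)|$, is globally hyperbolic, and is geodesically complete; the hyperboloid is simply not Cauchy, and indeed the time separation from it to points far in its future is infinite and unattained. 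You identify the needed hypothesis correctly in your closing paragraph, and the gap is harmless for the paper, which applies the proposition only to compact Cauchy hypersurfaces. Second, a small slip of wording: what your two causal-theoretic facts bound by $\tau_*$ is not ``the length of every future-directed timelike geodesic normal to $M$'' (under your completeness assumption those geodesics extend past their focal points as geodesics), but the length of every \emph{maximizing} segment from $M$, hence the time separation $d(M,\cdot)\leq\tau_*$; the contradiction then comes from extending a single normal geodesic to parameter $\tau_*+1$, whose endpoint has time separation at least $\tau_*+1$ from $M$. This is clearly what you intend, and with that reading the proof is sound.
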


\begin{cor}
	Let $(\overline{M},\mathbf{g})$ be a MGHD of Einstein's equations with cosmological constant $\Lambda=\frac{n(n-1)}{2}>0$. If $\overline{M}$ contains a Cauchy hypersurface $M$ whose induced metric has scalar curvature $R(g)<R_0<0$, then $\overline{M}$ is geodesically incomplete either to the future or to the past. 
	\end{cor}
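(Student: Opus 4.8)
The plan is to verify the two hypotheses of Proposition \ref{sing-thm} on the Cauchy hypersurface $M$ (reversing the time orientation if necessary); the crux is that a strictly negative upper bound on the \emph{spatial} scalar curvature forces the expansion to have magnitude greater than $n$ via the Hamiltonian constraint. First I would extract the spacetime Ricci tensor from the field equations. Tracing the vacuum equation $\ric-\tfrac12 \scal\,\mathbf g+\Lambda\,\mathbf g=0$ in dimension $n+1$ gives $\scal=\tfrac{2(n+1)}{n-1}\Lambda$, and with $\Lambda=\tfrac{n(n-1)}{2}$ this yields $\ric=n\,\mathbf g$. Consequently, for every timelike $X$,
\[
\ric(X,X)=n\,\mathbf g(X,X)=-n\,|\mathbf g(X,X)|,
\]
so the curvature hypothesis $\ric(X,X)\ge -n\,|\mathbf g(X,X)|$ of Proposition \ref{sing-thm} holds, with equality.

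Next I would use the Hamiltonian constraint on $(M,g,K)$, which for a vacuum solution with cosmological constant reads $R(g)+(\trace_g K)^2-|K|^2=2\Lambda=n(n-1)$. Splitting off the trace-free part $\Si$ of $K$, so that $|K|^2=|\Si|^2+\tfrac1n(\trace_g K)^2$, the constraint becomes
\[
\frac{n-1}{n}\,(\trace_g K)^2=n(n-1)-R(g)+|\Si|^2 .
\]
Discarding the nonnegative shear term $|\Si|^2$ and inserting $R(g)<R_0<0$ gives the uniform bound
\[
(\trace_g K)^2\ \ge\ n^2-\frac{n}{n-1}R(g)\ >\ n^2-\frac{n}{n-1}R_0\ =:\ \theta_0^2,\qquad \theta_0>n,
\]
so the expansion $\theta=\trace_g K$ of $M$ satisfies $|\theta|\ge\theta_0>n$ everywhere.

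Finally I would invoke connectedness. Since $\overline M$ is globally hyperbolic it splits as $\R\times M$ with $M$ connected, so the continuous function $\theta$ cannot cross the band $[-n,n]$. Hence either $\theta\le-\theta_0<-n$ on all of $M$, and Proposition \ref{sing-thm} applies verbatim to give future geodesic incompleteness; or $\theta\ge\theta_0>n$ on all of $M$, and applying the same proposition to the time-reversed spacetime --- which solves the same vacuum equations and has expansion $-\theta\le-\theta_0<-n$ --- shows that $\overline M$ is past geodesically incomplete. In either case $\overline M$ is incomplete to the future or to the past, as claimed.

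The step I expect to require the most care is the constraint manipulation: one must isolate $\trace_g K$ through the trace-free decomposition and retain only $|\Si|^2\ge0$, since this is exactly what turns the hypothesis $R(g)<0$ into the sharp threshold $(\trace_g K)^2>n^2$; the cruder estimate $|K|^2\ge0$ would only give $(\trace_g K)^2>n(n-1)$ and would force the stronger assumption $R(g)<-n$. The sign convention relating the expansion to $\trace_g K$ is irrelevant here, precisely because the conclusion is symmetric under time reversal.
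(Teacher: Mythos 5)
Your proof is correct and follows essentially the same route as the paper's: split $|K|^2$ into $|\Si|^2+\tfrac1n H^2$ in the Hamiltonian constraint, discard the shear term, use $R(g)<R_0<0$ to force $\theta^2>n^2$, and apply Proposition \ref{sing-thm} after a time reflection. The only difference is that you spell out two steps the paper leaves implicit — the verification that $\ric=n\,\mathbf g$ satisfies the curvature hypothesis, and the connectedness argument fixing the sign of $\theta$ — both of which are correct and harmless additions.
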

\begin{proof}
	Let $g$ be the induced metric on $M$ and as above, let $K$ be the second fundamental form of $M$ and $\Sigma=K-\frac{1}{n}H$ its tracefree part. By the constraint equations, 
	\eq{n(n-1)=R(g)-|K|^2+H^2=R(g)-|\Sigma|^2+\frac{n-1}{n}H^2\leq R_0+ \frac{n-1}{n}H^2,}
	so that 
	\eq{\theta^2=H^2\geq (n(n-1)-R_0)\frac{n}{n-1}>n^2.}
	After time reflection, if necessary, we may conclude 
	\eq{\theta\leq \theta_0<-n}
	and now future incompleteness follows from the above singularity theorem.
\end{proof}

\subsection{Proof of Theorem \ref{thm-1}}
\label{sec : 4}

\begin{proof}[Proof of Theorem \ref{thm-1}]

\noindent
We first prove Case 1. By assumptions on the $\lambda_i$, $y''(0)=n-\lambda_2<0$ and $x''(0)=n-\lambda_1>\frac{n}{n_1}>0$. Then, $x$ is monotonically increasing: If $t_0>0$ is the first time such that $x'(t_0)=0$, we get $x''(t_0)>\frac{n}{n_1}>0$ (if $\lambda_1\leq0$, we even get $x''(t_0)\geq n)$ which implies $x'(t)<0$ for $t\in (t_0-\varepsilon,t_0)$ and thus contradicts the assumption. Analogously, one shows that $y$ is monotonically decreasing.
	
	Now we show that as long as $y$ is uniformly bounded, its derivatives exist and $x$ and its derivatives cannot blow up in finite time. If $y'\leq-1$ and $y\geq Y_0$, we obtain by monotonicity, the evolution equation and $x'y'<0$ that
	\eq{y''\geq -C (y')^2}
	which after dividing by $y'$ and integration in turn implies
	\eq{\log|y'|\leq C(1-Y_0),}
	i.e. $y'$ is bounded.
	Observe that by the evolution equation, we get $x''\leq n+Cx'$ as long as $y'>-C$ which implies the assertion on $x$. 
	
	Next we show that $y$ is unbounded. If not, it must converge to a limit by monotonicity, so we have $y(\infty)=\lim_{t\to\infty} y(t)$. Then we get either a sequence of $t_i\to\infty$ such that $y''(t_i)=0$ or $\lim_{t\to\infty} y''(t)=0$. The first case cannot occur, since then
	$y''(t_i)=n-\lambda_2e^{-2y(t_0)}\geq \frac{n}{n_1}>0$. In the latter case, we necessarily get $\lim x'(t)=\infty$ by the evolution of $y$. On the other hand, by the constraint equation,
	\eq{n_1x''-n_2y''=n(n_1-n_2)-n_1\lambda_1e^{-2x}+n_2\lambda_2e^{-2y}-n_1^2(x')^2+n_2^2(y')^2\\
		\leq C-n_1^2(x')^2+n_2^2(y')^2}
	From the constraint and adding the evolution equations, we also easily obtain
	\eq{n_1(x''+(x')^2)+n_2(y''+(y')^2)=n,} 
	from which we get
	\eq{-n_1(x')^2-2n_2y''+n-n_2(y')^2\leq C-n_1^2(x')^2+n_2^2(y')^2}
	and therefore, since $n_1>1$ (if $n_1=1$, the above bound implies $\lambda_1<0$ which in turn contradicts $n_1=1$), we get a bound on $x'$ if $y'$ and $y''$ are bounded. This yields the contradiction.
	
	By monotonicity of $x$ and $y$ and the unboundedness of $y$ from below there is a constant $C>0$ and a time $t_1>0$ such that the differential inequality 
	\eq{
		n_1x''+n_2y''\leq-C-(n_1x'+n_2y')^2
	}
	holds for all $t\geq t_1$. By ODE comparison,
	$z(t)=n_1x'(t)+n_2y'(t)\leq -\sqrt{C}\tan(D+\sqrt{C}(t-t_1))$ for $t\geq t_1$ and some $D\in\R$. Therefore, $z\to-\infty$ as $t$ approaches some $t_+$. As $x'>0$, $y'\to-\infty$. We also have $y(t)\leq\frac{1}{n_2}\int_0^{t}z(s)ds\to-\infty$ as $t\to t_+$, which implies that $y$ diverges.
	In addition, because
	\eq{x''=n-\lambda_1e^{-2x}-zx'\geq -zx',}
	we obtain $x(t)\geq \int_0^t\exp(-\int_0^sz(r)dr)ds\to \infty$ as $t\to t_+$.

It remains to show that the Kretschmann scalar is unbounded at the singularity:
 	 Let $0$ be the index referring to the $t$-coordinate and $i,j$ and $a,b$ be indices referring to coordinates on $M_1,M_2$ respectively. One can check that the coefficients with an odd number of zeros vanish and therefore, the Kretschmann scalar satisfies
 	\eq{ |\tilde{R}|^2=\tilde{R}_{\mu\nu\lambda\sigma}\tilde{R}^{\mu\nu\lambda\sigma}\geq 0}
 	because all summands are nonnegative. Now it can be checked that
 	\eq{ \tilde{R}_{0ij0}=-\frac{\partial^2_t a}{a} \tilde{g}_{ij},\qquad \tilde{R}_{0ab0}=-\frac{\partial^2_t b}{b} \tilde{g}_{ab},}
 	so that
 	\eq{|\tilde{R}|^2\geq \left(n_1\frac{\partial^2_t a}{a}\right)^2+n_2\left(\frac{\partial^2_t b}{b}\right)^2=n_1(x''+(x')^2)^2+n_2(y''+(y')^2)^2}
 	where as above $x=\log(a)$, $y=\log(b)$.
 	Now, since $x\to\infty$ as $t\to t_+$,
 we also get $x''+(x')^2\to\infty$, which shows that the Kretschmann scalar must blow up. This finishes the proof in Case 1. In Case 2, it is easy to see that the metric given in the statement of the theorem is the MGHD of the given initial data.
	%
		Let us now consider Case 3. By assumption,
		\eq{
			y''(0)=n-\lambda_2\in(0,1)
		}
		and
		\eq{
			x''(0)=n-\lambda_1\in (1,\frac{n}{n_1}).
		}
		Both $x$ and $y$ are strictly monotonically increasing. We have $x'(t)>0$ for small $t$. Let $t_0$ be the first time, where $x'(t_0)=0$. Then  we obtain
		\eq{
			x''(t_0)=n-\lambda_1e^{-2x(t_0)}>0,
		}
		as long as $y'(t_0)$ exists. Thus $x'<0$ on $(t_0-\varepsilon,t_0)$, which causes the contradiction.
		Analogously one shows that $y$ is strictly monotonically increasing. From the evolution equations and monotonicity we deduce
		\eq{
			n^2-n_1\lambda_1-n_2\lambda_2-(n_1x'+n_2y')^2\leq n_1x''+n_2y''\leq n^2-(n_1x'+n_2y')^2.
		}
		The solution of the corresponding ODE, $z'=n^2-z^2$,
	 is $z(t)=n\tanh(nt+c)$, where $c\in \R$. By the initial conditions this implies $0<n_1x'+n_2y'<n$ for all $t>0$. Due to the positivity of $x'$ and $y'$ these statements hold for $x'$ and $y'$ individually.
		In particular, $x$ and $y$ exist for all times, which finishes the proof of Case 3.

Finally, recollapse of MGHDs of small perturbations of the initial data induced by the models follows by the Cauchy stability of the Einstein equations in combination with Theorem \ref{sing-thm}. As the background solutions have Cauchy surfaces of strictly positive and strictly negative mean curvature, this also holds for the MGHDs of small perturbations. Then the singularity theorem yields incompleteness. This idea of proof is explained in further detail originally in \cite{Ri09}. Stability in the expanding direction follows from the results in \cite{Ri08}.		
		\end{proof}
	\begin{rem}\label{expansionrate}	With some more effort, we can determine the expansion rate of the warping factors of the  solutions appearing in Case 3:	
		By the above, we have $n_1x'(t)+n_2y'(t)>C_1>0$ for all $t\geq t_1$ and $n_1x(t)+n_2y(t)>C_2t$ for all $t\geq t_1$ and $C_2>0$. Using this, we can improve the above estimate. Denoting $z=n_1x'+n_2y'$ as above, we get
		\eq{
			n^2-Ce^{-Ct}-z^2\leq z'\leq n^2-z^2.
		}	
		from which we obtain the uniform bounds $0\geq z-n\geq -Ce^{-Ct}$. After integrating,
		\eq{\label{z}0\geq n_1x+n_2y-nt\geq -C.}
		Similarly, we obtain for $w:=n_1x'-n_2y'$ the estimate
		\eq{ n(n_1-n_2)-Ce^{-C t} -zw \leq w'\leq  n(n_1-n_2)+Ce^{-C t} -zw}
		which implies because of the estimate on $n-z$ that $|w-(n_1-n_2)|\leq Ce^{-Ct}$. Again by integrating,
		\eq{\label{w}{C_1\geq n_1x-n_2y-(n_1-n_2)t\geq C_2.}}	
		Finally, we obtain from \eqref{z} and \eqref{w} that $\sup \left\{|x-t|,|y-t|\right\}\leq C$. It is immediate that the warping functions satisfy $C_1 e^t\leq a(t) \leq C_2e^t$ and $C_1 e^t\leq b(t) \leq C_2e^t$ for $t\geq0$.
		\end{rem}
		

\subsection{Proof of Theorem \ref{thm-2}}

\begin{proof}[Proof of Theorem \ref{thm-2}]

If $\lambda_1,\lambda_2\leq 0$, we are going to show that all corresponding solutions are geodesically incomplete in one direction and geodesically complete in the other.
	Let us first remark that the case where $R(g)\equiv |\Sigma|\equiv 0$ corresponds to the solution $-dt^2+e^{2t}g$ which is geodesically incomplete in the past and geodesically complete in the future. Therefore we assume $R(g)-|\Sigma|^2\leq R_0<0$ from now on.
		 By the constraint equation, we have 
		\eq{n(n-1)=R(g)-|K|^2+H^2=R(g)-|\Sigma|^2+\frac{n-1}{n}H^2\leq R_0+ \frac{n-1}{n}H^2,}
		As above, we get $|\theta|\geq \theta_0>n$. If $\theta\leq-\theta_0<-n$, we obtain geodesic incompleteness by Proposition \ref{sing-thm}. If $\theta\geq\theta_0>n$, we have the differential inequality
			\eq{ \partial_t \theta\leq n-\frac{1}{n}\theta^2}
	so that we have the upper bound
			\eq{\theta \leq n\cdot\frac{e^{2t}-A_0}{e^{2t}+A_0}= \frac{n}{1-|A_0|e^{-2t}}(1+|A_0|e^{-2t})\leq n+n|A_0|e^{-2t}}
		where $A_0=\frac{n-\theta_0}{n+\theta_0}$. Again by the constraint equation,
		\eq{0\leq-R(g)+|\Sigma|^2= \frac{n-1}{n}(\theta^2-n^2)\leq Ce^{-2t}. }
		In particular, since $R(g)=n_1\lambda_1e^{-2x}+n_2\lambda_2 e^{-2y}$, both factors $x,y$ grow linearly, in particular, they do not converge to $-\infty$. Thus, they are geodesically complete.\end{proof}
		\begin{rem}
			Since $\theta\to n$ and $ |\Sigma|^2\to 0$ exponentially, one concludes that $x',y'\to 1$ exponentially and $\sup \left\{|x-t|,|y-t|\right\}\leq C$ as in Remark \ref{expansionrate}.
		\end{rem}
		\begin{rem}
			We expect that the generic type of singularity that occurs in the geodesically incomplete direction is of Kasner type, i.e. one of the factors goes to $\infty$ while the other goes to $-\infty$ as $t\to t_+$. This also implies a blow-up of the Kretschmann scalar as in the proof of Theorem \ref{thm-1}.
		\end{rem}
\subsection{Proof of Theorem \ref{thm-3}}
\begin{proof}[Proof of Theorem \ref{thm-3}]
We have the data $n_1=2$, $n_2=1$, $\lambda_1=\lambda$, $\lambda_2=0$, so the evolution equations are
\eq{\alg{x''&=3-\lambda e^{-2x}-2(x')^2-x'y'\\
	y''&=3-(y')^2-2x'y'}}
and the constraint equation reduce to
\eq{6=2\lambda e^{-2x}+2(x')^2+4x'y'.}
We demand $y'(0)=0$ such that we obtain a one-parameter family of solutions by prescribing $\lambda$ and $x'(0)$ such that
\eq{3=\lambda+(x'(0))^2}
by the constraint. Recall that we generally require $x(0)=0$. For our purposes, we additionally assume $\lambda>0$ throughout.
 By using the constraint, the equation on $x$ decouples as
\eq{\alg{x''&=3-\lambda e^{-2x}-2(x')^2-x'y'=\frac{1}{2}(3-\lambda e^{-2x}-3(x')^2).}}
Now we show that the MGHD is future incomplete, if $\lambda<3$ and $x'(0)<0$. At first,
\eq{x''(0)=3-\lambda-2x'(0)^2=(3-\lambda-(x'(0))^2)-x'(0)^2<0,}
which shows that both $x$ and $x'$ are monotonically decreasing for small time. In addition, as long as $x$ and $x'$ are monotonically decreasing, we have
\eq{x''=\frac{1}{2}(3-\lambda e^{-2x}-3(x')^2)\leq\frac{1}{2}(3-\lambda-2(x'(0)))\leq \frac{1}{2}x''(0)<0,}
which shows that this monotonicity property holds for all time and $x$ and $x'$ are unbounded. In particular, for large time, we get
\eq{x''\leq -\frac{3}{2}(x')^2. }
This shows that $x'\to -\infty$ as $t$ approaches some $t_+$, since the comparison ODE is solved by $z(t)=\frac{2}{c+3t}$. By integrating, one obtains the estimate
\eq{x(t)\leq C_1+\log(C_2-\frac{3}{2}t)}
for some constants  $C_1,C_2>0$ which shows that also $x\to-\infty$ as $t\to t_+$.
 	\end{proof}
 	\begin{rem}We are also able to show that this singularity is of Kasner type, i.e. $y(t)\to\infty$ as $t\to t_+$. At first note that $y$ is monotonically increasing for small time. In fact, it is monotonically increasing for all $t>0$ since $y''(t_0)=3>0$ as long as $y'(t_0)=0$. But then we have an estimate 
\eq{y'' \geq 3-(y')^2+2 |z|y'}
and since $z\to\infty$ as $t\to t_+$ this shows that $y'\to\infty$ as $t\to t_+$ and because the integral of $z$ diverges, we get that also $y$ diverges as $t\to t_+$.
\end{rem}
\begin{rem}
A similar analysis as in the proof of Theorem \ref{thm-1} shows that every MGHD is complete in the other time-direction and arguments as in Remark \ref{expansionrate} imply exponential growth of both warping functions in the past.
\end{rem}
%
%
%

\vspace{0.1cm}
\noindent
\textsc{David Fajman\\
Faculty of Physics, University of Vienna,\\
Boltzmanngasse 5, 1090 Vienna, Austria}\\ 
\vspace{-0.4cm}\\
\texttt{David.Fajman@univie.ac.at}\\

\noindent
\textsc{Klaus Kr\"oncke\\
Faculty of Mathematics, University of Hamburg,\\
Bundesstrasse 55, 20241 Hamburg, Germany}\\ 
\vspace{-0.4cm}\\
\texttt{klaus.kroencke@uni-hamburg.de}\\

\end{document}